\documentclass[conference]{ieeeconf}
\IEEEoverridecommandlockouts

\usepackage{algorithmic}
\usepackage{amsmath,amssymb,amsfonts}
\usepackage{cite}
\usepackage{comment}
\usepackage[acronym]{glossaries}
\usepackage{graphicx}
\usepackage{textcomp}
\usepackage{xcolor}

\DeclareMathOperator*{\argmin}{arg\,min}

\newtheorem{lemma}{Lemma}

\newtheorem{remark}{Remark}

\definecolor{lightblue}{rgb}{0.60784,0.76078,0.90196}
\definecolor{darkblue}{rgb}{0.26667,0.44706,0.76863}
\definecolor{lightgreen}{rgb}{0.66275,0.81569,0.55686}
\definecolor{darkgreen}{rgb}{0.43922,0.67843,0.27843}
\definecolor{orange}{rgb}{0.92941,0.49020,0.19216}
\definecolor{yellow}{rgb}{1.00000,0.75294,0.00000}
\definecolor{grey}{rgb}{0.64706,0.64706,0.64706}
\definecolor{purple}{rgb}{0.51373,0.23529,0.04706}

\newacronym{abk:arz}{ARZ}{Aw-Rascle and Zhang}
\newacronym{abk:cav}{CAV}{Connected Autonomous Vehicle}
\newacronym{abk:cavs}{CAVs}{Connected Autonomous Vehicles}
\newacronym{abk:ge}{GE}{Generalization Error}
\newacronym{abk:lwr}{LWR}{Lighthill-Whitham-Richards}
\newacronym{abk:ml}{ML}{Machine Learning}
\newacronym{abk:pde}{PDE}{Partial Differential Equation}
\newacronym{abk:pdes}{PDEs}{Partial Differential Equations}
\newacronym{abk:pil}{PIL}{Phyics Informed Learning}

\newacronym{abk:sumo}{SUMO}{Simulation of Urban MObility}

\begin{document}

\title{Second Order Physics-Informed Learning of \\Road Density using Probe Vehicles
\thanks{\textsuperscript{1} Division of Decision and Control Systems, KTH Royal Institute of Technology, Stockholm, Sweden (e-mail: \{sarabg, jonas1, barreau\}@kth.se).} 
\thanks{This work was partially supported by the Wallenberg AI, Autonomous Systems and Software Program (WASP) funded by the Knut and Alice Wallenberg Foundation. }}
\author{Sara Betancur Giraldo\textsuperscript{1}, Jonas Mårtensson\textsuperscript{1}, Matthieu Barreau\textsuperscript{1}}

\maketitle

\begin{abstract}
We propose a Physics Informed Learning framework for reconstructing traffic density from sparse trajectory data. 
The approach combines a second-order Aw–Rascle and Zhang model with a first-order training stage to estimate the equilibrium velocity. 
The method is evaluated in both equilibrium and transient traffic regimes using SUMO simulations. 
Results show that while learning the equilibrium velocity improves reconstruction under steady state conditions, it becomes unstable in transient regimes due to the breakdown of the equilibrium assumption. 
In contrast, the second-order model consistently provides more accurate and robust reconstructions than first-order approaches, particularly in non-equilibrium conditions.
\end{abstract}

\section{Introduction}

Traffic congestion is a growing societal challenge with significant environmental and public health consequences, including increased pollution, heightened stress, longer commute times, and greater accident risk~\cite{ferrara2018freeway}. 
Moreover, it leads to traffic instability, as with a high number of cars, small disturbances in traffic, such as breaking or lane changes, can create stop-and-go waves~\cite{kerner2000}. 
One way to reduce congestion and its impacts is to understand the dynamics of jam formation and address the underlying factors that drive it. 
Multiple studies have investigated how to control, dissipate, and prevent traffic congestion from different perspectives, such as ramp metering, link control, or driver information and guidance systems~\cite{Papageorgiou2003}. 
However, to address this problem, it is essential to have reliable information on traffic state density, defined as the number of cars per unit of road length. 

To date, several studies have investigated different estimation techniques \cite{seo2017traffic}. 
Traffic data are typically collected using vehicle counting sensors installed at fixed locations. 
Existing estimation approaches range from classical traffic flow theory to modern model-based methods. 
Classical methods rely on loop detector measurements using fundamental diagram relationships and conservation laws~\cite{DAGANZO1994}. 
Model-based methods extend these formulations through data assimilation techniques, such as Kalman or particle filtering, often requiring multi-sensor setups or data fusion~\cite{wang2008}. 
Both approaches are fundamentally limited by sparse spatial coverage and the high costs associated with sensor deployment and maintenance. 

The emergence of \gls{abk:cavs} has introduced new opportunities to measure traffic density, as vehicles can collect data across space and time rather than relying solely on fixed sensors. 
One of the main advantages of this paradigm is the ability to dynamically increase measurement coverage where needed, mitigating data sparsity at a relatively low cost. 
This alleviates the ill-posedness of the problem by improving observability and reducing non-uniqueness compared to fixed-sensor setups, where sparse boundary measurements lead to underdetermined traffic-state estimation~\cite{HERRERA2010}. 

However, even with increased data coverage, the traffic state cannot be directly observed everywhere and at all times and must therefore be inferred from appropriate modeling assumptions. 
In this context, traffic flow models play a central role in reconstructing quantities such as vehicle density and spacing. 
These models can be broadly classified into two categories: microscopic models, which describe individual driver behavior, such as the follow-the-leader dynamics~\cite{Treiber2014}, and macroscopic models that aggregate traffic dynamics by treating the road as a continuous flow, analogously to fluid motion in a pipe~\cite{Lighthill1955,Richards1956}. 
Among the main limitations of these approaches are the scalability challenges of microscopic models and the lack of individual behavior in the macroscopic ones. 
Within macroscopic modeling, first-order models are relatively simple to implement but fail to capture velocity variations \cite{barreau2021physics}, whereas second-order models introduce additional dynamics that can represent such effects, but at the cost of reduced interpretability and increased sensitivity to parameter choices. 
To enable scalable estimation using probe vehicle measurements, it is desirable to adopt a coupled micro–macro framework, in which microscopic trajectory data provide localized observations that guide the evolution of macroscopic traffic flow models. 

The increasing availability of such probe vehicle data has motivated recent research to combine these measurements with prior modeling knowledge via \gls{abk:pil}~\cite{RAISSI2019}. 
\gls{abk:pil} integrates the strengths of data-driven approaches with physical modeling by embedding governing laws, typically expressed as \gls{abk:pde}, directly into the learning process. 
This incorporation of physics acts as a regularization mechanism, guiding the model and improving generalization.
Unlike traditional \gls{abk:pde} solutions, which often rely on rigid assumptions and idealized conditions, \gls{abk:pil} frameworks can incorporate observed sensor data from probe vehicles to inform the solution, even in the presence of model mismatch~\cite{barreau2025control}. 
A major advantage of using \gls{abk:pil} for traffic density reconstruction is its effectiveness in scenarios with limited data availability. 
Real-time extensions have been investigated through physics-informed neural operators~\cite{harting2025closed}.

Previous studies on \gls{abk:pil} methods applied to second-order traffic flow models have reconstructed traffic density using data from loop detectors supported by point-wise measurements well spread across the domain~\cite{shi2021physics}. Without these extra local measurements, the training is unstable. 
In contrast, this paper proposes a \gls{abk:pil} framework for reconstructing traffic density from trajectory data only, without relying on fixed sensors or point-wise data. 
We derive a novel hybrid model that combines a second-order \gls{abk:arz} formulation with a first-order microscopic model, leveraging the first-order macroscopic dynamics. 
The resulting model is incorporated into a modified \gls{abk:pil} framework to leverage the strengths of both formulations and provide a stable training. 
We demonstrate that this setup enables accurate reconstruction under sparse measurements and analyze its limitations in data-scarce regions.

The remainder of this paper is structured as follows. 
In Section~\ref{sec: background}, we introduce three different physics-based traffic models and relate them to each other. 
Section~\ref{sec: methodology} provides the conditions for coupling these models, introduces the theory of data-based ML, and proposes a framework to integrate physics-based and data-based models within a \gls{abk:pil} approach. 
In Section~\ref{sec: results}, we describe the generation of training data by means of a software simulation with SUMO. 
We thereby evaluate the \gls{abk:pil} reconstruction of traffic density for different flow models by comparing it with the simulation's ground truth. 
Finally, in Section~\ref{sec: conclusion}, we draw conclusions and discuss directions for future work. 

\paragraph*{Notation} For a differentiable function $u:\mathbb{R}^2\to\mathbb{R}$, $(t, x) \mapsto u(t,x)$, we use both $u_t = \partial_t u$ ($u_x = \partial_x$) to define its derivative with respect to the first (second, respectively) variable. We denote by $\hat{u}_\theta$ the neural network approximation of $ u$ with parameters $\theta$ on a compact domain.
\section{Traffic Flow Models} \label{sec: background} 
In this section, we investigate traffic dynamics, which can be described at different levels of abstraction, ranging from microscopic models that track individual vehicles to macroscopic models that describe aggregated traffic behavior. 
We integrate both perspectives to enable a data-driven framework for reconstructing traffic density. 

\subsection{Microscopic Model: Follow-the-leader Model}
We consider a first-order follow-the-leader dynamical system in which each vehicle adapts its velocity based on the distance to the vehicle ahead 
\begin{align}
    \label{eq:ftl}
	\begin{cases}
		\dot{x}_i(t) &= V_{\mathrm{eq}}\left(\frac{1}{x_{i+1}(t) - x_{i}(t)}\right) \ \mathrm{for} \ i \in {1, \dots, N-1}, \\
		\dot{x}_N(t) &= V_\mathrm{lead}(t),
	\end{cases}
\end{align}
the formation follows the leader vehicle $N$ from initial positions $x_1(0) < \dots < x_N(0)$ when $V_\mathrm{lead}(t)\geq0$~\cite{Treiber2014}. 
This model captures individual driver behavior through a simple integrator, making it a first-order model. 
The dynamics follow each vehicle in the traffic flow, providing trajectory data. 

\subsection{Macroscopic Models} 
Macroscopic models describe traffic flow in terms of continuous variables such as density $\rho$ and velocity $v$. These models vary in complexity and suitability depending on the traffic scenario. 

\subsubsection{First-Order Model \gls{abk:lwr}}\label{sec: LWRmodel}
The \gls{abk:lwr} model consists of traffic dynamics based on a conservation law~\cite{Lighthill1955, Richards1956} following 
\begin{align} 
	\label{eq: LWR}
		\rho_t + \big( \rho V_{\mathrm{eq}}(\rho) \big)_x &= 0
\end{align}
with appropriate boundary and initial conditions. 

While simple and computationally efficient, this model cannot accurately capture complex phenomena such as the spontaneous formation of congestion waves. 
Moreover, since $\rho$ may vary abruptly, the corresponding equilibrium velocity $V_{\mathrm{eq}}(\rho)$ can exhibit instantaneous changes, which is physically unrealistic. 

\subsubsection{Second-Order Model \gls{abk:arz}}\label{sec: ARZmodel}

To address this limitation, the velocity can be modeled to relax towards the equilibrium velocity rather than adjusting instantaneously.
One such formulation is the second-order \gls{abk:arz} model proposed in \cite{aw2000, Zhang2002}, which incorporates an additional evolution equation for velocity. 
The resulting second-order dynamics are described by
\begin{equation}    
    \label{eq: ARZ}
    \left\{
    \begin{array}{l}
        \displaystyle\rho_t + (\rho v)_x = 0, \\
        \displaystyle(\rho \omega)_t + (\rho v \omega)_x = \frac{\rho}{\tau}(V_\mathrm{eq}(\rho)-v).
    \end{array}
    \right.
\end{equation} 
The \gls{abk:arz} model introduces the concept of an effective velocity $\omega$ which represents the desired driver's velocity, adjusted by the traffic pressure, and is defined as $\omega = v+p(\rho)$. 
The traffic pressure models how drivers adjust their speed as congestion increases and is defined as $p(\rho) = V_{\text{max}} - V_{\text{eq}}(\rho)$, where $\alpha$ controls the sensitivity of drivers to changes in density (higher $\alpha$ corresponds to stronger deceleration under congestion). 

The model in~\eqref{eq: ARZ} is studied in its nonhomogeneous form, where $\tau > 0$ is the relaxation time and $V_\mathrm{eq}(\rho)$ represents the equilibrium velocity. 
In the second equation, the model describes how drivers gradually adjust their speed $v$ towards the desired equilibrium speed $V_\mathrm{eq}(\rho)$ over a time $\tau$, while this adaptation propagates with the flow of traffic. 
This allows the model to represent both propagation and dissipation of congestion waves more accurately than first-order models. 
Note that as $\tau \to 0$, the relaxation dynamics become instantaneous, leading to $v \simeq V_{\mathrm{eq}}(\rho)$, recovering a first-order model. 

\begin{remark}
    Note that in~\cite{shi2021physics}, the authors study both homogeneous and nonhomogeneous formulations; however, the second equation is not expressed in the $\rho w$ variables, resulting in a non-conservative formulation that is significantly more challenging to learn. 
    
    Moreover, the authors adopt a Greenshields equilibrium velocity~\cite{greenshields1935} and learn the fundamental diagram $f(\rho) = \rho v$. 
    This formulation can become numerically unstable at low densities, as recovering the velocity requires division by $\rho$. 
\end{remark}

\subsection{Coupled Micro-Marco Model}\label{sec:coupled}
In~\cite{barreau2021physics}, the authors introduce a coupling between microscopic and macroscopic first-order formulations. 
Therein, a key condition to construct a model that combines~\eqref{eq:ftl} and~\eqref{eq: LWR} is that the equilibrium velocity $V_\mathrm{eq}$ is twice differentiable and strictly decreasing.  
However, extending this coupling to second-order models is significantly more complex. 

A possible approach is to modify the follow-the-leader formulation to include second-order dynamics. 
Yet, this may compromise important properties, such as collision avoidance, and requires careful design to ensure physical consistency~\cite{gazis1961nonlinear}. 
Furthermore, convergence to the \gls{abk:arz} model is not guaranteed as the number of vehicles tends to infinity.

In contrast, we keep a first-order follow-the-leader formulation, for which it is well established that, as the number of vehicles increases, the microscopic dynamics converge to the \gls{abk:lwr} macroscopic traffic model \cite{holden2018follow}. 
In particular, vehicle trajectories can be interpreted as samples of the macroscopic velocity field. 
The resulting estimate of $V_{\mathrm{eq}}$ is then used as the equilibrium velocity in the second-order model.

In summary, the first-order model follows: 
\begin{equation} 
	\label{eq:coupling_first_order}
	\left\{
    \begin{array}{l}
        \displaystyle \rho_t + \big(\rho V_{\mathrm{eq}}(\rho)\big)_x = 0, \\
	   	\displaystyle \dot{x}_i = V_{\mathrm{eq}}(\rho(\cdot, x_i)), \quad i \in \{1, \dots, N\},
    \end{array}
    \right.
\end{equation}
and the second-order model follows: 
\begin{align} 
	\label{eq:coupling}
    \left\{
    \begin{array}{l}
        \displaystyle\rho_t + (\rho v)_x = 0, \\
        \displaystyle(\rho \omega)_t + (\rho v \omega)_x = \frac{\rho}{\tau}(V_\mathrm{eq}(\rho)-v), \\
    	\displaystyle\dot{x}_i(t) = v(\cdot, x_i), \quad i \in \{1, \dots, N\}
    \end{array}
    \right.
\end{align}
under the same conditions as in \cite{barreau2021physics}. 
These formulations enable the integration of trajectory measurements into a second-order macroscopic traffic model, effectively coupling microscopic observations with macroscopic dynamics. 

The resulting coupled system is nonlinear and infinite-dimensional, making both its solution and the design of suitable observers particularly challenging. 
To address this, we propose a method to reconstruct traffic density from sparse trajectory data by leveraging the coupled system within a \gls{abk:pil} framework, as described in the next section.

\section{Methodology} \label{sec: methodology} 
We propose a \gls{abk:pil} framework for reconstructing traffic density from sparse trajectory measurements, leveraging the coupled traffic flow models introduced in the previous section. 
The key idea is to approximate the traffic state of the second-order model~\eqref{eq:coupling} using a neural network while enforcing consistency with both observed trajectory data and the governing macroscopic dynamics.
To enhance stability, the equilibrium velocity $V_{\mathrm{eq}}$ is not learned within this model, but is instead estimated using the first-order formulation in~\eqref{eq:coupling_first_order} for later use within the second-order model.
The learning of velocity in the first-order stage is known to yield accurate and stable results~\cite{barreau2021physics}.

\subsection{Learning Framework}
The proposed methodology consists of two stages: first, the estimation of the equilibrium velocity using a first-order formulation, and second, the reconstruction of the traffic state using a second-order model.

Through the micro–macro coupling, vehicle trajectories provide localized observations of the macroscopic traffic state, which are incorporated into the learning process as data constraints. 
This leads to a constrained optimization problem that combines data-driven learning with physical constraints, giving a robust reconstruction method under sparse observations.

This two-stage decomposition improves stability by separating the estimation of the equilibrium velocity from the reconstruction of the full traffic state.

\subsection{First-order training}\label{sec:training_1}
We first estimate the equilibrium velocity using a first-order formulation, which provides a stable, identifiable basis for the subsequent second-order reconstruction.
The proposed approach relies on tailored neural network architectures. 
Classical approaches, such as \cite{shi2021physics,barreau2021physics}, typically employ a single neural network to approximate the traffic state $\rho$. 
In contrast, we adopt a similar architecture while extending it to represent the coupled structure of the proposed model. 
This architecture is used to approximate the density field from trajectory observations: 
\[
    t,x \mapsto \hat{\rho}_{\theta}(t,x) = \Phi_{\theta_1} (t,x)^{\top} \theta_2
\]
where $\Phi_{\theta_1}: \mathbb{R}^2 \to \mathbb{R}^n$ is a feedforward neural network with parameters $\theta_1$ generating features in a $n$ dimensional latent space. 
The estimated density $\hat{\rho}_\theta(t,x)$ is the projection of the features generated by $\Phi_{\theta_1}(t,x)$ onto $\mathbb{R}$ through the linear projection parametrized by $\theta_2$. 
This architecture has universal approximation capabilities, provided that the network size $n$ is sufficiently large~\cite{hornik1989multilayer}. 

In addition to the density approximation, we parameterize the equilibrium velocity using a separate neural network with a tailored architecture, allowing for a more stable and flexible representation compared to predefined parametric models:
\[
    \rho \mapsto \hat{V}_{\mathrm{eq}}(\rho) = \left[ V_{\mathrm{max}} + \hat{\Psi}_\phi(\rho)(\rho) \right] (1 - \rho)
\]
where $\Psi_\phi$ is a neural network with parameter $\phi$. 
The validity of this architecture is established in the following lemma. 

\begin{lemma}
    Given that $V_{\mathrm{eq}}$ is Lipschitz continuous from $[0, 1]$ to $\mathbb{R}$ and has the following properties:
    \[
        \begin{array}{ll}
            V_{\mathrm{eq}}(0) = V_{\mathrm{max}}, \quad & V_{\mathrm{eq}}'(0) \mathrm{\ exists},\\
            V_{\mathrm{eq}}(1) = 0, & V_{\mathrm{eq}}'(1) \mathrm{\ exists},
        \end{array}
    \]
    then, there are parameters $\phi$ such that ${V}_{\mathrm{eq}}$ can be approximated arbitrarily close by $\hat{V}_{\mathrm{eq}}$ on $\rho \in [0, 1]$.
\end{lemma}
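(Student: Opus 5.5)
The plan is to reduce the claim to the universal approximation theorem~\cite{hornik1989multilayer} for the network $\hat{\Psi}_\phi$ on the compact set $[0,1]$. The architecture forces $\hat{V}_{\mathrm{eq}}(\rho) = \left[V_{\mathrm{max}} + \hat{\Psi}_\phi(\rho)\right](1-\rho)$, so I will exhibit a continuous target function $g$ on $[0,1]$ with
\[
    V_{\mathrm{eq}}(\rho) = \left[V_{\mathrm{max}} + g(\rho)\right](1-\rho), \qquad \rho \in [0,1],
\]
and then approximate $g$ uniformly by $\hat{\Psi}_\phi$.

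The natural candidate is $g(\rho) = \frac{V_{\mathrm{eq}}(\rho)}{1-\rho} - V_{\mathrm{max}}$ for $\rho \in [0,1)$. On $[0,1)$ this function is continuous because $V_{\mathrm{eq}}$ is Lipschitz, hence continuous. At $\rho = 0$ it satisfies $g(0) = V_{\mathrm{eq}}(0) - V_{\mathrm{max}} = 0$, which is consistent with the boundary condition but not needed for approximation.

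The main step, and the only delicate one, is extending $g$ continuously to $\rho = 1$. Since $V_{\mathrm{eq}}(1) = 0$,
\[
    \frac{V_{\mathrm{eq}}(\rho)}{1-\rho} = -\frac{V_{\mathrm{eq}}(\rho) - V_{\mathrm{eq}}(1)}{\rho - 1} \longrightarrow -V_{\mathrm{eq}}'(1) \quad \text{as } \rho \to 1^-,
\]
because the derivative at $1$ exists (as a one-sided derivative on $[0,1]$). So I set $g(1) = -V_{\mathrm{eq}}'(1) - V_{\mathrm{max}}$, which makes $g \in C([0,1])$. The Lipschitz assumption alone would only make the quotient bounded, by the Lipschitz constant $L$, and not convergent; this is exactly where the existence of $V_{\mathrm{eq}}'(1)$ is used. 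The hypothesis on $V_{\mathrm{eq}}'(0)$ is not needed for this argument. It would matter only if one wanted approximation of derivatives, or an extra factor $\rho$ in the architecture.

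Finally, I apply universal approximation. Given $\varepsilon > 0$, choose $\phi$ with $\sup_{[0,1]} |\hat{\Psi}_\phi - g| < \varepsilon$. Then
\[
    \sup_{\rho \in [0,1]} \left|\hat{V}_{\mathrm{eq}}(\rho) - V_{\mathrm{eq}}(\rho)\right| = \sup_{\rho \in [0,1]} |1-\rho|\,\left|\hat{\Psi}_\phi(\rho) - g(\rho)\right| \le \varepsilon,
\]
which gives uniform approximation on $[0,1]$. As a by-product, the construction enforces $\hat{V}_{\mathrm{eq}}(1) = 0$ exactly for every $\phi$.
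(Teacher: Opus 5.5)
Your argument is correct for the architecture $[V_{\mathrm{max}} + \hat{\Psi}_\phi(\rho)](1-\rho)$, but the lemma is about a different architecture. The garbled expression $\hat{\Psi}_\phi(\rho)(\rho)$ in the paper stands for $\rho\,\hat{\Psi}_\phi(\rho)$, so the intended network is $\hat{V}_{\mathrm{eq}}(\rho) = [V_{\mathrm{max}} + \rho\,\hat{\Psi}_\phi(\rho)](1-\rho)$. This form hard-codes both $\hat{V}_{\mathrm{eq}}(0) = V_{\mathrm{max}}$ and $\hat{V}_{\mathrm{eq}}(1) = 0$, which are the ``two conditions'' of the remark that follows. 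It is also why the hypotheses $V_{\mathrm{eq}}(0) = V_{\mathrm{max}}$ and existence of $V_{\mathrm{eq}}'(0)$ appear in the statement. Your reading leaves those two of the four listed hypotheses unused, and that should have been the warning sign. For the intended architecture, the function the network must approximate is
\[
\Psi(\rho) = \frac{V_{\mathrm{eq}}(\rho) - V_{\mathrm{max}}(1-\rho)}{\rho(1-\rho)} = \frac{g(\rho)}{\rho}, \qquad \rho \in (0,1),
\]
with your $g$. The continuous extension of this quotient at $\rho = 0$ is missing from your proof.

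The missing step is short and mirrors your argument at $\rho = 1$. Since $g(0) = V_{\mathrm{eq}}(0) - V_{\mathrm{max}} = 0$, and $g$ is (one-sidedly) differentiable at $0$ with $g'(0) = V_{\mathrm{eq}}'(0) + V_{\mathrm{max}}$, you get $g(\rho)/\rho \to V_{\mathrm{eq}}'(0) + V_{\mathrm{max}}$. This is exactly the value the paper obtains from its Taylor expansion. At $\rho = 1$ your limit gives $\Psi(1) = -V_{\mathrm{eq}}'(1) - V_{\mathrm{max}}$, which is the paper's ``similar argument''. With this added, your proof coincides with the paper's:
\begin{itemize}
\item extend the quotient continuously at both endpoints using the one-sided derivatives;
\item apply universal approximation;
\item bound the error by $|\hat{V}_{\mathrm{eq}} - V_{\mathrm{eq}}| = \rho(1-\rho)\,|\hat{\Psi}_\phi - \Psi| \le \tfrac{1}{4}\sup_{[0,1]}|\hat{\Psi}_\phi - \Psi|$.
\end{itemize}

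As an aside, in both versions the derivative hypotheses only serve to produce an exact continuous factor. The approximation claim itself holds for any continuous $V_{\mathrm{eq}}$ with the two boundary values, because the weight $\rho(1-\rho)$ vanishes at the endpoints. Approximate $G = V_{\mathrm{eq}} - V_{\mathrm{max}}(1-\rho)$ uniformly by a polynomial $p$. Replace $p$ by $p - p(0)(1-\rho) - p(1)\rho$, which stays uniformly close because $G(0) = G(1) = 0$. Divide by $\rho(1-\rho)$ and approximate the resulting polynomial by the network. So your statement that this is ``exactly where'' $V_{\mathrm{eq}}'(1)$ is used is true of your construction, not of the lemma.
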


\begin{proof}
    Let $\Psi(\rho) = \frac{V_{\mathrm{eq}}(\rho) - V_{\mathrm{max}}(1-\rho)}{\rho (1 - \rho)}$ for $\rho \in (0, 1)$. We will show that $\Psi$ is continuous on the compact domain $[0, 1]$ (can be extended continuously) and that it can be approximated arbitrarily closely by a neural network~\cite{hornik1989multilayer}. 

    At $\rho = 0$, by the Lipschitz continuity and noting that $V_{\mathrm{eq}}'(0)$ exists, then we can make a Taylor expansion at $\rho = 0$:
    \[
        V_{\mathrm{eq}}(\rho) = V_{\mathrm{eq}}(0) + \rho V_{\mathrm{eq}}'(0) + o(\rho).
    \]
    Then we get:
    \begin{multline*}
        \Psi(\rho) = \frac{V_{\mathrm{max}} + \rho V_{\mathrm{eq}}'(0) - V_{\mathrm{max}}(1-\rho) +o(\rho) }{\rho(1-\rho)} \\
        = \frac{V_{\mathrm{eq}}'(0) + V_{\mathrm{max}}}{1 - \rho} + o(1)
    \end{multline*}
    and therefore $\Psi(0)$ exists and is finite. A similar argument shows that $\Psi(1)$ exists and is finite, thereby concluding the proof.
\end{proof}

\begin{remark}
    Note that we forced two conditions; however, for the existence of a unique solution to \eqref{eq:coupling_first_order}, we also require $V_{\mathrm{eq}}$ to be decreasing (Definition~1 in~\cite{barreau2021physics}). 
    This constraint has not been added as a hard constraint; instead, it is enforced softly. 

    In practice, it is common to have $V_{\mathrm{eq}}'(0) \leq 0$ and $V_{\mathrm{eq}}'(1) = 0$ since it reflects real traffic conditions \cite{ferrara2018freeway}. 
    This ensures that the proposed lemma has reasonable assumptions. 
\end{remark}

Based on the proposed neural network parameterizations, we formulate the following optimization problem: 
\[
    \theta^*, \phi^* \in \begin{array}[t]{cl}
        \displaystyle\argmin_{\theta, \phi} & \displaystyle \sum_{i=1}^{N} \sum_{k=1}^{N_\mathrm{mea}} \frac{|\rho(t_k,x_i^k) - \hat{\rho}_{\theta}(t_k,x_i^k)|^2}{N \cdot N_\mathrm{mea}} \\
        \text{s.t.} & \partial_t \hat{\rho}_{\theta} + \partial_x \left(\hat{\rho}_{\theta} \hat{V}_{\mathrm{eq}}(\hat{\rho}_{\theta}) \right) = 0,
    \end{array}
\]
where we employ $N_\mathrm{mea}$ measurements of local density at time instants $\{ t_k \}_{k=1}^{N_\mathrm{mea}}$ and locations $\{(x_i^k)\}_{i=1,j=1}^{N,N_\mathrm{mea}}$ per probing vehicle. 

We will approximate a local minimum of the previous problem using gradient-descent on $\theta$ and $\phi$ on the following loss: 
\begin{multline*}
    \mathcal{L}_{1}(\theta, \phi) = \sum_{i=1}^{N} \sum_{k=1}^{N_\mathrm{mea}} \frac{|\rho(t_k,x_i^k) - \hat{\rho}_{\theta}(t_k,x_i^k)|^2}{N \cdot N_\mathrm{mea}} \\
    + \frac{\lambda}{N_\mathrm{phys}} \sum_{(t,x) \in D} \left( \partial_t \hat{\rho}_{\theta}(t,x) + \partial_x \left(\hat{\rho}_{\theta} \hat{V}_{\mathrm{eq}}(\hat{\rho}_{\theta}) \right)(t,x) \right)^2
\end{multline*}
where $\lambda > 0$ is the Lagrange multiplier, $D = \{(t_k,x_k)\}_{k=1}^{N_\mathrm{phys}}$ is a uniform sampling over the space $[0, T] \times [0, L]$ with $L$ the length of the road considered and $T$ the time horizon for the reconstruction. 

\subsection{Second-order training}\label{sec:training_2}
\label{sec:PIL} 
Having obtained an estimate of the equilibrium velocity, we extend the framework to the second-order model to reconstruct the full traffic state. 
This design choice reduces the number of unknown components in the second-order formulation and improves the stability of the training process. 
In particular, the relaxation parameter $\tau$ is fixed a priori, allowing the model to focus on reconstructing the density and velocity fields under a known relaxation time scale. 

This time, we approximate the traffic state using a neural network, building on the previous one:
\begin{align}
	W=(t, x) \mapsto \left[ \begin{matrix} \Phi_{\theta_1}(t,x)^{\top}\theta_2 \\ \Phi_{\theta_1}(t,x)^{\top}\theta_3 \end{matrix} \right] = \left[ \begin{matrix} \hat{\rho}_\theta(t, x) \\ \hat{v}_\theta(t, x) \end{matrix} \right] =\hat{W}_\theta. 
\end{align}
where $\theta = \{\theta_1, \theta_2, \theta_3\}$ in this setting. 
This joint architecture reflects the equilibrium relationship $v(t,x)=V_{\mathrm{eq}}(\rho(t,x))$, motivating the use of shared features with separate projections. 
This design reduces the number of parameters while preserving universal approximation properties and improving training stability. 

Using the previously defined neural network architectures, we formulate the second-order learning problem as follows. 
We estimate the state $U$ in~\eqref{eq: ARZ} as: 
\[
    \hat{U}_\theta(t, x) = \left[ \begin{matrix}
        \hat{\rho}_\theta(t,x) \\
        \hat{\rho}_\theta(t,x) \left( \hat{v}_\theta(t,x) + p(\hat{\rho}_\theta(t,x)) \right)
    \end{matrix} \right]
\]
leading to the differential equation
\begin{equation}
    \partial_t \hat{U}_\theta + \partial_x \left( \hat{v}_\theta \hat{U}_\theta \right) = S(\hat{U}_\theta)
\end{equation}
where $\displaystyle S(\hat{U}_\theta) = \begin{bmatrix} 0 \\ \tau^{-1} \left( \hat{V}_\mathrm{eq}(\hat{\rho}_\theta) - \hat{v}_\theta \right) \end{bmatrix}$.

This time, the parameters $\phi = \phi^*$ are fixed after the first step and correspond to the first-order assumption. 
Consequently, the problem we want to solve in the second step is:
\[
    \theta^* \in \begin{array}[t]{cl}
        \displaystyle\argmin_{\theta} & \displaystyle \sum_{i=1}^{N} \sum_{k=1}^{N_\mathrm{mea}} \frac{\|W(t_k,x_i^k) - \hat{W}_{\theta}(t_k,x_i^k)\|^2}{N \cdot N_\mathrm{mea}} \\
        \text{s.t.} & F[\hat{U}_\theta] \triangleq \partial_t \hat{U}_\theta + \partial_x \left( \hat{v}_\theta \hat{U}_\theta \right) - S(\hat{U}_\theta) = 0.
    \end{array}
\]

In practice, we use a gradient-descent algorithm to estimate $\theta^*$ using the loss
\begin{multline*}
    \mathcal{L}_2 = \sum_{i=1}^{N} \sum_{k=1}^{N_\mathrm{mea}} \frac{\|W(t_k,x_i^k) - \hat{W}_{\theta}(t_k,x_i^k)\|^2}{N \cdot N_\mathrm{mea}} \\
    + \frac{\lambda}{N_\mathrm{phys}} \sum_{(t,x) \in D} \left\| F\left[\hat{U}_\theta(t,x)\right](t,x) \right\|^2
\end{multline*}
where $D$ has been defined in the previous subsection and $\lambda > 0$ is the Lagrange multiplier.

To assess the performance of the proposed framework, we define the following evaluation metrics.

\subsection{Evaluation Metrics}
To assess the reconstruction performance of the proposed framework beyond the observed data, we define quantitative and qualitative evaluation criteria. 

The primary quantitative metric is the relative $L_2$ error between the reconstructed and ground-truth traffic states. For the density, this is defined as $\|\rho - \hat{\rho}_\theta\|_{L_2}^2$ evaluated over the whole spatiotemporal domain. Similarly, for the velocity, $\|v - \hat{v}_\theta\|_{L_2}^2$. 


In addition to these quantitative measures, we compare reconstructed and ground truth spatiotemporal fields to qualitatively assess the model's ability to capture key traffic phenomena, such as congestion wave propagation.


\begin{figure*}[t]
	\centering
	\includegraphics[width=0.99\textwidth]{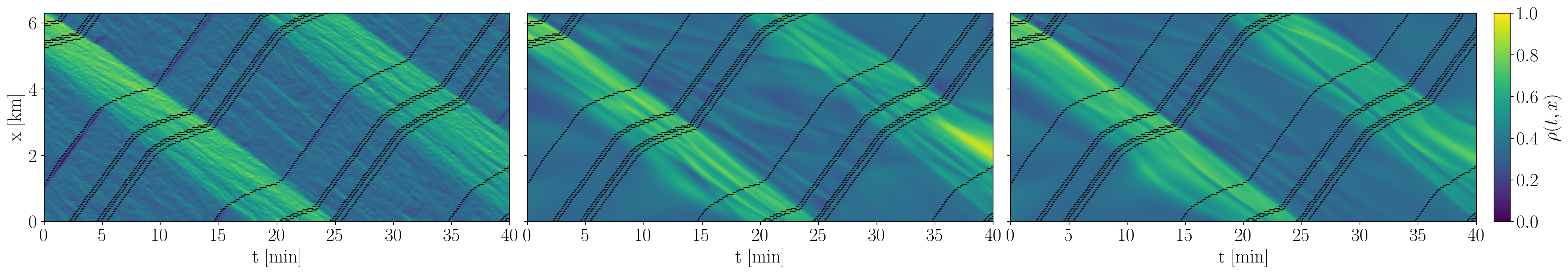}
	\caption{Comparison of traffic density evolving in time and space between \gls{abk:sumo} simulation (left) and reconstruction from trajectory data using \gls{abk:lwr} (middle) and \gls{abk:arz} (right) under equilibrium data regime. The black dots represent data points measured from trajectories of simulated vehicles. }
	\label{fig:density}
\end{figure*} 
\section{Results}\label{sec: results}
In this section, we evaluate the density state reconstruction using simulated data. 
We consider data from a circular road simulation using the \gls{abk:sumo} package~\cite{sumo}, with parameters and resulting density states detailed in Section~\ref{sec:setup}. 
Section~\ref{sec:equilibrium_data} compares the density reconstruction trained on trajectory data under an equilibrium data regime. 
Section~\ref{sec:transient_data} compares the density reconstruction trained on trajectory data under a transient data regime. 
Finally, Section~\ref{sec:error} discusses the estimation errors when repeating the training with different weight initializations. 

\subsection{Experimental setup} \label{sec:setup}
In this section, we describe the evaluation of the proposed methods. 
We start by implementing two \gls{abk:sumo} simulations to obtain trajectory data. 
The simulations consist of a circular road of length $L = 6.2~\mathrm{km}$ and duration $T = 40~\mathrm{min}$. 
We investigate two data regimes, an equilibrium data regime and a transient data regime. 
In both regimes, we consider a simulation in which vehicles are inserted until reaching an average density of $\rho = 0.4$. 
In the first regime, the data collection is performed after the transient behavior has settled and the dynamics are stable. 
In the second regime, we simulate a disturbance on the road, which forces vehicles to reduce their speed at different times and duration. 
We encode a penetration rate of $0.02$ for the equilibrium regime and $0.05$ for the transient regime. 

We compare the density reconstruction under different traffic models, a first-order \gls{abk:lwr} model and a second-order \gls{abk:arz} model. 
We implement the learning frameworks detailed in Sections~\ref{sec:training_1} and~\ref{sec:training_2} to estimate the density and velocity fields. 
The obtained results are evaluated using spatiotemporal plots and the $L_2$ relative error in density and velocity. 

\begin{remark}
Since the considered setting corresponds to a ring road, boundary effects are avoided, unlike in~\cite{barreau2021physics}. As a result, the error is expected to remain bounded over time, with values close to zero under accurate reconstruction.
\end{remark}

\subsection{Reconstruction in equilibrium data regime} \label{sec:equilibrium_data}
Figure~\ref{fig:density} shows the traffic density obtained from the \gls{abk:sumo} simulation (left), together with the reconstructions obtained using the first-order \gls{abk:lwr} model (middle), and the second-order \gls{abk:arz} model (right) under the equilibrium data regime. 
The black dots indicate the probe vehicle trajectories used for training. 

\begin{figure}[t]
	\centering
	\includegraphics[width=0.99\linewidth]{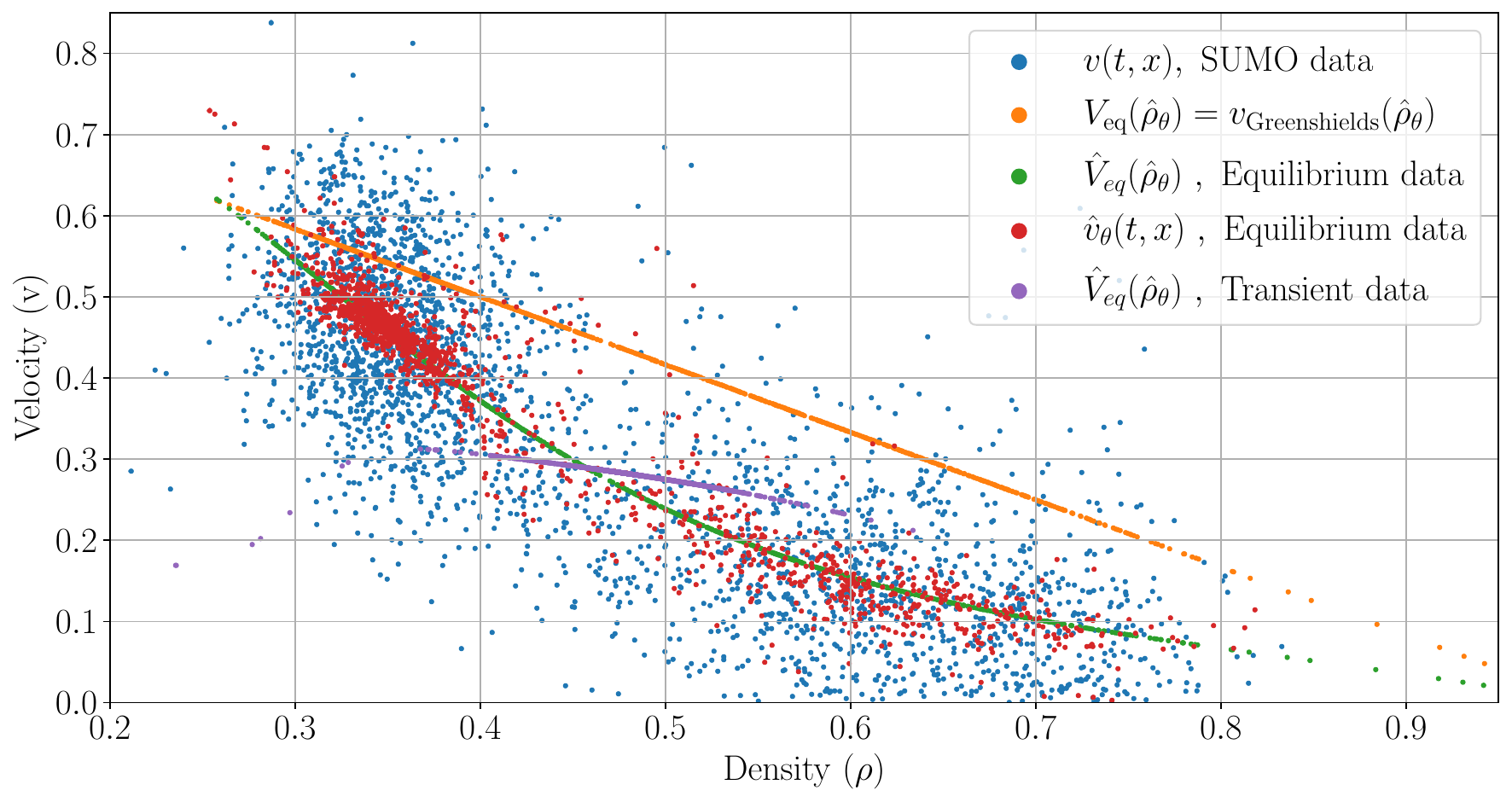}
	\caption{Comparison of density vs velocity for trajectories data points taken from SUMO simulation vs Greenshields velocity model vs learned velocity model. }
	\label{fig:rhovsv}
\end{figure} 

Both models successfully reconstruct the main congestion wave patterns despite the sparsity of the measurements. 
In particular, the direction and timing of wave propagation are well preserved, and the reconstructed fields remain smooth across the entire space and time domains.

The second-order \gls{abk:arz} model produces smoother and more coherent wave structures compared to the first-order \gls{abk:lwr} model, indicating a better representation of the underlying traffic dynamics. 
Moreover, the \gls{abk:arz} reconstruction captures sharper transitions in congested regions, indicating improved modeling of its formation and dissipation. 
However, it tends to slightly underestimate higher densities.

Quantitatively, the \gls{abk:lwr} model achieves a relative $L_2$ error of $0.1474$, while the \gls{abk:arz} model reduces this error to $0.1273$. 
This improvement indicates that incorporating second-order dynamics leads to a more accurate reconstruction, even in regimes where traffic appears close to equilibrium.

\begin{figure*}[t]
	\centering
	\includegraphics[width=0.99\textwidth]{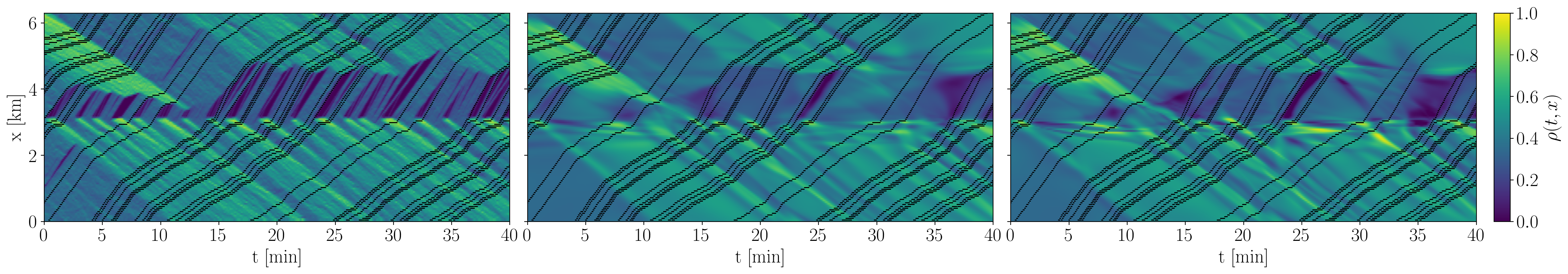}
	\caption{Comparison of traffic density evolving in time and space between \gls{abk:sumo} simulation (left) and reconstruction from trajectory data using \gls{abk:lwr} (middle) and \gls{abk:arz} (right) under transient data regime. The black dots represent data points measured from trajectories of simulated vehicles. }
	\label{fig:density_v}
\end{figure*} 

To further analyze the modeling assumptions, Figure~\ref{fig:rhovsv} compares the relationship between density and velocity for the \gls{abk:sumo} data and the corresponding model estimates. 
The \gls{abk:sumo} data exhibit a nonlinear and scattered relationship. 
This motivates the learning of $V_\mathrm{eq}$ through neural networks, as regular linear models can fail to capture the variability observed in the data. 
A learned equilibrium velocity $\hat{V}_\mathrm{eq}$ provides a better approximation of the underlying relationship as can be seen in Figure~\ref{fig:rhovsv}. 
However, it remains a single-valued function of density and therefore cannot fully represent the dispersion present in the data. 
In contrast, the velocity field reconstructed by the \gls{abk:arz} model, $\hat{v}_\theta(t,x)$ better matches the distribution of the data, highlighting the advantage of second-order formulations in capturing non-equilibrium effects. 

Interestingly, once the data regime is changed and transient behavior needs to be estimated by means of a neural network, the reconstruction of $\hat{V}_\mathrm{eq}$ fails. 
This is noticeable in Figure~\ref{fig:rhovsv}, where the neural network cannot capture the relationship between velocity and density. 
This result motivates fixing $V_\mathrm{eq}$ to a linear model such as Greenshields, while estimating the density. 
We analyze the transient data regime in the section below. 

\subsection{Reconstruction in transient data regime} \label{sec:transient_data}

Figure~\ref{fig:density_v} shows the reconstruction results under the transient data regime, where the simulation includes time dependent disturbances that induce nonequilibrium traffic behavior.
Compared to the equilibrium case, the reconstruction task becomes significantly more challenging due to the increased complexity of the dynamics. 
The data exhibit rapid changes, interacting congestion waves, and stronger deviations from equilibrium relationships.

Nevertheless, the second-order \gls{abk:arz} model achieves a better reconstruction of the density than the first-order \gls{abk:lwr} model. 
It better captures the intensity and propagation of congestion waves, producing a more realistic spatiotemporal reconstruction. 
In particular, \gls{abk:arz} is able to reproduce the interaction and intensity of waves while maintaining the propagation speed more accurately. 

Quantitatively, the relative $L_2$ error increases to $0.3756$ for the \gls{abk:lwr} model and $0.2641$ for the \gls{abk:arz} model. 
The increase in error compared to the equilibrium regime reflects both the higher complexity of the dynamics and the limitations imposed by sparse measurements. 
Despite this degradation, the \gls{abk:arz} model maintains a clear advantage, demonstrating its ability to handle nonequilibrium traffic behavior.

These results indicate that learning the equilibrium velocity is only reliable under equilibrium conditions, and becomes ill-posed in transient regimes where the velocity and density relationship is not single valued.

\subsection{Error analysis} \label{sec:error}

\begin{figure}[t]
	\centering
	\includegraphics[width=\linewidth]{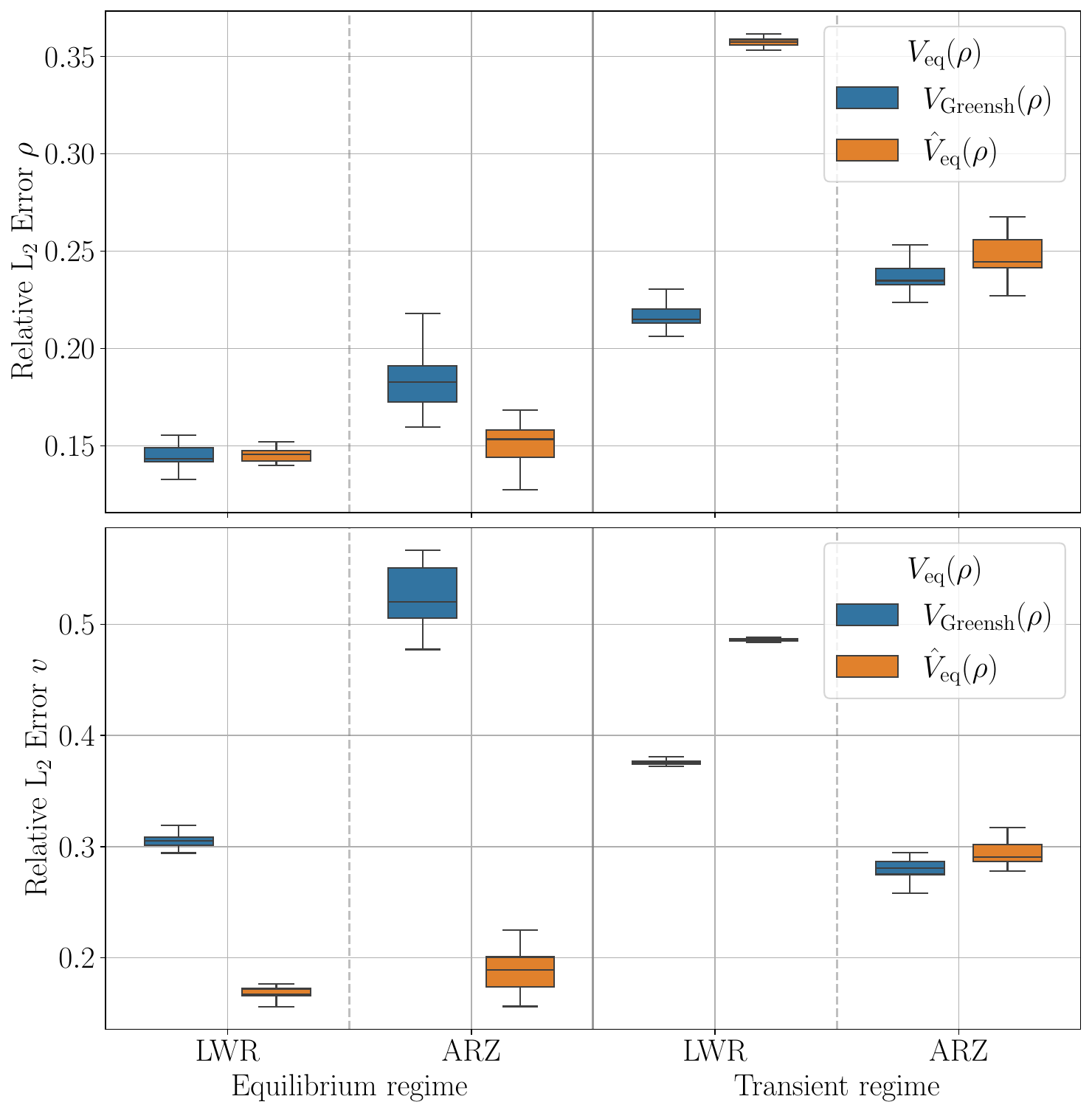}
	\caption{Distribution of $L_2$ error on $\rho$ (top) and $L_2$ error on $v$ (bottom). }
	\label{fig:error}
\end{figure} 

To assess the robustness and consistency of the proposed frameworks, we evaluate the reconstruction error over multiple training runs with different random weight initializations. 
The results are summarized in Figure~\ref{fig:error} using boxplots of the relative $L_2$ error for both density and velocity. 
Each box corresponds to a different model configuration (\gls{abk:lwr} or \gls{abk:arz}) and equilibrium velocity choice (Greenshields or learned). 
The boxplots reveal two key aspects of the learning behavior: accuracy and variability across runs.

In the equilibrium regime, all model configurations exhibit relatively small variability, indicating stable training dynamics. 
The differences between \gls{abk:lwr} and \gls{abk:arz} are moderate, and the boxplots do not indicate a uniform advantage of one model over all others across all quantities. 
Rather, the results suggest that under equilibrium conditions, both first- and second-order formulations can provide reliable reconstructions when supported by sparse trajectory measurements.

In contrast, the transient regime exhibits larger variability across runs, with wider error distributions and more pronounced outliers. 
This effect is particularly visible for the configuration in which $V_\mathrm{eq}$ is learned, indicating that the estimation becomes sensitive to initialization and optimization when the traffic state shifts from the equilibrium. 
This behavior is consistent with the fact that, in the transient regime, the relationship between density and velocity is no longer well represented by a single valued equilibrium function

The boxplots further show that the main performance gap in the transient regime is between the learning $\hat{V}_\mathrm{eq}$ configuration and the linear Greenshields $V_\mathrm{eq}$ ones. 
In particular, \gls{abk:lwr} with estimated $\hat{V}_\mathrm{eq}$ exhibits higher density errors and larger spread, while \gls{abk:lwr} with fixed Greenshields $V_\mathrm{eq}$ and \gls{abk:arz} (with both, estimated $\hat{V}_\mathrm{eq}$ and Greenshields $V_\mathrm{eq}$) produce more stable error distributions. 
Hence, the results do not indicate that \gls{abk:arz} uniformly outperforms all \gls{abk:lwr} configurations in the transient regime; rather, they show that learning $\hat{V}_\mathrm{eq}$ is the least robust choice in this setting. 

Overall, the results highlight that the key trade-off is between flexibility and robustness. 
While learning $\hat{V}_\mathrm{eq}$ can be beneficial when the traffic state remains close to equilibrium, it becomes unstable in transient regimes where the velocity and density relationship is multi valued. 
In such cases, fixing $V_\mathrm{eq}$ yields more consistent behavior across runs, and the second-order \gls{abk:arz} formulation remains a competitive alternative for representing non-equilibrium dynamics. 

\section{conclusion} \label{sec: conclusion}
In this paper, we investigated \gls{abk:pil} methods for reconstructing traffic density on a road segment under equilibrium and transient dynamics. 
We developed a framework that incorporates second-order traffic models into the neural network loss, both on its own and including a learn velocity field model. 
These frameworks enable a density reconstruction guided by sparse trajectory data while respecting physics traffic laws. 
The training data consisted of vehicle trajectories, recording time, position, density, and velocity, in a circular road scenario. 
Data were obtained from a \gls{abk:sumo} simulation that models individual driver behavior.

The results highlight that incorporating second-order dynamics significantly improves reconstruction accuracy, particularly in transient regimes where first-order models fail to capture non-equilibrium effects. 
However, learning the equilibrium velocity introduces instability when the underlying traffic dynamics deviate from equilibrium, as the velocity–density relationship becomes multi-valued. 
This suggests that hybrid approaches combining learned and fixed models may provide a better trade-off between flexibility and robustness.

Future work could extend this reconstruction framework to larger traffic networks, where sparse measurements can be used to estimate traffic states in real time for multiple road segments. 
Such applications could complement existing work on transportation networks, including the use of \gls{abk:cavs} as robotaxis to enhance population accessibility, as explored in~\cite{SALAZAR2024}.

\section*{Acknowledgment}

Computation resources were provided by the National Academic Infrastructure for Supercomputing in Sweden (NAISS), partially funded by the Swedish Research Council through grant agreement no. 2022-06725.

\bibliographystyle{ieeetr}
\bibliography{biblio}

\end{document}